\newtheorem{theorem}{Theorem}[section]
\newtheorem{lemma}[theorem]{Lemma}
\title{New record-breaking binary linear codes constructed from group codes}
\author{
 Cong Yu \\
  School of Mathematics\\
  Hefei University of Technology\\
  Hefei, Anhui, People’s Republic of China, 230009  \\
  \texttt{zyu6952@gmail.com} \\
   \And
 Shixin Zhu \\
  School of Mathematics\\
  Hefei University of Technology\\
  Hefei, Anhui, People’s Republic of China, 230009 \\
  \texttt{zhushixinmath@hfut.edu.cn} \\
  \And
  Hao Chen \\
  College of Information Science and Technology\\
  Jinan University\\
  Guangzhou, Guangdong, People’s Republic of China, 510632 \\
  \texttt{haochen@jnu.edu.cn} \\
  \And
 Yang Li \\
  School of Mathematics\\
  Hefei University of Technology\\
  Hefei, Anhui, People’s Republic of China, 230009 \\
  \texttt{yanglimath@163.com} \\
  \And
 Xiuyu Zhang\\
  School of Mathematics\\
  Hefei University of Technology\\
  Hefei, Anhui, People’s Republic of China, 230009 \\
  \texttt{xyzfuman@mail.hfut.edu.cn} \\
}
\begin{document}
\maketitle
\begin{abstract}
In this paper, we employ group rings and automorphism groups of binary linear codes to construct new record-breaking binary linear codes. We consider the semidirect product of abelian groups and cyclic groups and use these groups to construct linear codes. Finally, we obtain some linear codes which have better parameters than the code in \cite{bib5}. All the calculation results and corresponding data are listed in the paper or posted online.
\end{abstract}

\keywords{Binary linear code\and  Group ring\and  Automorphism group\and  Record-breaking code.}

\section{Introduction}
Let $\mathbb{F}_q $ be the finite field with $q$ elements. A $q$-ary $[n,k,d]$ linear code $\mathcal{C}$ is a $k$-dimensional subspace of $\mathbb{F}_q^n$, where $d$ is the minimum distance of $\mathcal{C}$. For $\mathbf{a}=(a_1,a_2,\dots ,a_n), \mathbf{b}=(b_1,b_2,\dots ,b_n)\in \mathbb{F}_q^n$, the ordinary inner-product is defined by $[\mathbf{a},\mathbf{b}]=\sum_{i=1}^na_ib_i$. The dual code of $\mathcal{C}$ is defined by
$$\mathcal{C}^{\bot}=\left\lbrace \mathbf{c}\in \mathbb{F}_q^n \mid [\mathbf{x},\mathbf{c}]=0, \forall \mathbf{x} \in \mathcal{C} \right\rbrace. $$ Two  linear codes $\mathcal{C}_1 $ and $\mathcal{C}_2$ are  permutation equivalent if there is a permutation $\pi \in S_n$ such that $\mathcal{C}_1=\pi(\mathcal{C}_2)$, where $S_n$ is the symmetric group on $n$ letters and $\pi(\mathcal{C})=\left\lbrace \pi(\mathbf{c}) \mid \mathbf{c} \in \mathcal{C} \right\rbrace$. The permutation automorphism group of $\mathcal{C}$ is denoted by $\text{PAut}(\mathcal{C})=\left\lbrace \pi \in S_n \mid \pi(\mathcal{C})=\mathcal{C} \right\rbrace $.
\begin{lemma}\cite{bib3}
	Let $\mathcal{C}$ be a binary linear code. Then $\mathrm{PAut}(\mathcal{C})=\mathrm{PAut}(\mathcal{C}^{\bot})$.
\end{lemma}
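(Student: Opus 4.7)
The plan is to prove the two inclusions $\mathrm{PAut}(\mathcal{C})\subseteq\mathrm{PAut}(\mathcal{C}^{\bot})$ and $\mathrm{PAut}(\mathcal{C}^{\bot})\subseteq\mathrm{PAut}(\mathcal{C})$, using the fact that any coordinate permutation preserves the standard inner product together with the involution $(\mathcal{C}^{\bot})^{\bot}=\mathcal{C}$, which holds because we are over $\mathbb{F}_2$.

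First I would record the key compatibility: for every $\pi\in S_n$ and every $\mathbf{a},\mathbf{b}\in\mathbb{F}_2^n$, one has $[\pi(\mathbf{a}),\pi(\mathbf{b})]=[\mathbf{a},\mathbf{b}]$, since permuting the coordinates only rearranges the summands of $\sum_{i=1}^n a_ib_i$. Equivalently, $[\mathbf{x},\pi(\mathbf{c})]=[\pi^{-1}(\mathbf{x}),\mathbf{c}]$ for all $\mathbf{x},\mathbf{c}$.

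Next, assume $\pi\in\mathrm{PAut}(\mathcal{C})$. To show $\pi\in\mathrm{PAut}(\mathcal{C}^{\bot})$, pick an arbitrary $\mathbf{c}\in\mathcal{C}^{\bot}$ and any $\mathbf{x}\in\mathcal{C}$; applying the identity above gives $[\mathbf{x},\pi(\mathbf{c})]=[\pi^{-1}(\mathbf{x}),\mathbf{c}]=0$, because $\pi^{-1}\in\mathrm{PAut}(\mathcal{C})$ (groups are closed under inverses) and so $\pi^{-1}(\mathbf{x})\in\mathcal{C}$. This proves $\pi(\mathcal{C}^{\bot})\subseteq\mathcal{C}^{\bot}$; equality follows from dimension (or by repeating the argument with $\pi^{-1}$), so $\pi\in\mathrm{PAut}(\mathcal{C}^{\bot})$ and the first inclusion is established.

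For the reverse inclusion, I would simply feed $\mathcal{C}^{\bot}$ into the inclusion just proved: $\mathrm{PAut}(\mathcal{C}^{\bot})\subseteq\mathrm{PAut}((\mathcal{C}^{\bot})^{\bot})=\mathrm{PAut}(\mathcal{C})$, where the last equality uses $(\mathcal{C}^{\bot})^{\bot}=\mathcal{C}$. Combining the two inclusions yields the claim. There is no real obstacle here; the only subtlety worth flagging is to be explicit that the argument uses both that $\mathrm{PAut}(\mathcal{C})$ is closed under inversion and that the double dual recovers $\mathcal{C}$, so that the proof is genuinely symmetric in $\mathcal{C}$ and $\mathcal{C}^{\bot}$.
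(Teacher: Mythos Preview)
Your proof is correct and is exactly the standard argument for this well-known fact. Note, however, that the paper does not actually supply a proof of this lemma: it is stated with a citation to Huffman--Pless \cite{bib3} and used without further justification, so there is no in-paper proof to compare against. Your argument (permutations preserve the inner product, hence $\mathrm{PAut}(\mathcal{C})\subseteq\mathrm{PAut}(\mathcal{C}^{\bot})$, then apply the double-dual identity for the reverse inclusion) is precisely the proof one finds in the cited reference.
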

For fixed length $n$ and dimension $k$, finding the largest possible value of minimum distance $d$ is an open problem in coding theory. In \cite{bib5}, Grassl provides online code tables  that contains linear codes with the best known parameters. However, there are gaps between the lower and upper bounds on the minimum distance in the code tables. Improving the lower bounds or narrowing the upper bounds is a meaningful work in coding theory. In recent years, many scholars have made great contribution on improving the lower bound on the minimum distance in the code tables. Since cyclic codes, quasi-cyclic codes and constacyclic codes  have good algebraic structure, they were used to construct new linear codes, for details see \cite{bib7},\cite{bib13},\cite{bib12},\cite{bib8},\cite{bib10},\cite{bib9},\cite{bib11}.
\\ \indent Group rings are useful tools in constructing codes. In \cite{bib1}, Hurley  proposed group ring matrix(or $G$-matrix). Group codes are ideals in group rings. In \cite{bib2}, the authors  used  $G$-matrix to study group codes. Since then, $G$-matrix have been widely used to construct some interesting codes(see \cite{bib18},\cite{bib22},\cite{bib23},\cite{bib19},\cite{bib21},\cite{bib14},\cite{bib20}). The code generated by  $G$-matrix is an interesting class of codes because the group $G$ is isomorphic to a subgroup of the automorphism group of the code. Linear codes with large automorphism groups can reduce the amount of computation for encoding and decoding(see \cite{bib16},\cite{bib17}). However, in \cite{bib5}, the order of automorphism groups of most linear codes is very small, generally not exceeding their code lengths, and some are even trivial. This is not practical in the process of encoding and decoding. The order of the automorphism group of a code generated by a $G$-matrix is at least equal to the code length. Therefore, when constructing a linear code with a $G$-matrix, we can obtain good parameters and at the same time ensure that the order of the automorphism group is at least $n$, where $n$ is the code length. In \cite{bib4}, Huffman gave a method for decomposing a linear code by using an automorphism of the code. Using this idea we can decompose a linear code into a direct sum of two subcodes through an automorphism of the code. Furthermore, if the properties of the two subcodes are better, a new record-breaking code can be constructed by the well-known Construction X.
\\ \indent The paper is structured as follows: In Sec. \ref{Sec 2}, we give some basic concepts about group rings and $G$-matrix. In Sec. \ref{Sec.3}, we study the $G$-matrix of several classes of groups which are semidirect products of abelian groups and cyclic groups. In Sec. \ref{Sec.4}, we give the idea of using the automorphism of codes to construct new record-breaking linear codes. In Sec. \ref{Sec 5}, we give the final calculation results and list them in the table. In Sec. \ref{Sec 6}, we give some possible directions for future research.

\section{Group ring and $G$-matrix}\label{Sec 2}
Let $G=\left\lbrace g_1,g_2,\dots,g_n\right\rbrace $ be a finite group of order $n$. A  group code is an ideal of a group ring $\mathbb{F}_qG$, where $\mathbb{F}_qG=\left\lbrace \sum_{i=1}^n\alpha_{g_i}g_i \mid \alpha_i \in \mathbb{F}_q \right\rbrace $. The addition in $\mathbb{F}_qG$ is defined by $$
\sum_{i=1}^n\alpha_{g_i}g_i+\sum_{i=1}^n\beta_{g_i}g_i=\sum_{i=1}^n(\alpha_{g_i}+\beta_{g_i})g_i,
$$ and the product of two elements in $\mathbb{F}_qG$ is given by:
$$\left(\sum_{i=1}^{n} \alpha_{g_{i}} g_{i}\right)\left(\sum_{j=1}^{n} \beta_{g_{j}} g_{j}\right)=\sum_{i=1}^{n}
\left (   \sum_{j=1}^{n}\alpha_{g_{i}g_{j}^{-1}} \beta_{g_{j}}\right )g_{i}.$$ The group ring matrix was first introduced in \cite{bib1}. Let $v=\sum_{i=1}^{n} \alpha_{g_{i}} g_{i}\in \mathbb{F}_qG$, there is a natural bijection $\Psi$ from $\mathbb{F}_qG$ to $\mathbb{F}_q^n$: $$\Psi(v)=(\alpha_{g_1},\alpha_{g_2},\dots ,\alpha_{g_n}).$$ Thus, in the later content, we can regard an elements in $\mathbb{F}_qG$ as a vector in $\mathbb{F}_q^n$.  Define the map $\sigma_G$ from $\mathbb{F}_qG$ to $M_{n}\left( \mathbb{F}_q\right) $ to be 
$$\sigma_G(v) =\left(\begin{array}{ccccc}
	\alpha_{g_{1}^{-1} g_{1}} & \alpha_{g_{1}^{-1} g_{2}} & \alpha_{g_{1}^{-1} g_{3}} & \cdots & \alpha_{g_{1}^{-1} g_{n}} \\
	\alpha_{g_{2}^{-1} g_{1}} & \alpha_{g_{2}^{-1} g_{2}} & \alpha_{g_{2}^{-1} g_{3}} & \cdots & \alpha_{g_{2}^{-1} g_{n}} \\
	\vdots & \vdots & \vdots & \vdots & \vdots \\
	\alpha_{g_{n}^{-1} g_{1}} & \alpha_{g_{n}^{-1} g_{2}} & \alpha_{g_{n}^{-1} g_{3}} & \cdots & \alpha_{g_{n}^{-1} g_{n}}
\end{array}\right),$$ where $M_n(\mathbb{F}_q)$ is the set of all $n\times n$ matrix over $\mathbb{F}_q$. The matrix $\sigma_G(v)$ is called a group ring matrix or $G$-matrix. The code generated by the row space of $\sigma_G(v)$ is denoted by $C(v)$. It is not difficult to verify that $C(v)$ is a left principal ideal of $\mathbb{F}_qG$.
\begin{lemma}(\cite{bib1})
	$\sigma_G$ is an injective ring homomorphism, i.e. for $v_1,v_2\in \mathbb{F}_qG$, $\sigma_G(v_1v_2)=\sigma_G(v_1)\sigma_G(v_2)$.
\end{lemma}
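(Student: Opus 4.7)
The plan is to verify the three ring-homomorphism conditions (respecting addition, respecting multiplication, sending $1$ to $I_n$) together with injectivity, in that order, since the first three are largely bookkeeping and the real content sits in the multiplicative identity.

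First I would dispose of injectivity. Suppose $\sigma_G(v)=0$ for $v=\sum_i \alpha_{g_i}g_i$. Reading the first row of $\sigma_G(v)$, every entry $\alpha_{g_1^{-1}g_j}$ vanishes; but as $g_j$ runs through $G$, the element $g_1^{-1}g_j$ also runs through all of $G$, so every coefficient $\alpha_h$ of $v$ is zero. This gives $v=0$, hence $\sigma_G$ is injective. Additivity $\sigma_G(v_1+v_2)=\sigma_G(v_1)+\sigma_G(v_2)$ follows directly by comparing $(i,j)$-entries, since coefficient extraction is $\mathbb{F}_q$-linear.

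For multiplicativity I would fix $v_1=\sum_i\alpha_{g_i}g_i$ and $v_2=\sum_j\beta_{g_j}g_j$, and use the convolution formula given in the paper to write the coefficient of $g_k^{-1}g_\ell$ in $v_1v_2$ as
\[
\gamma_{g_k^{-1}g_\ell}=\sum_{j=1}^n \alpha_{g_k^{-1}g_\ell g_j^{-1}}\beta_{g_j}.
\]
On the other side, the $(k,\ell)$-entry of the matrix product is
\[
\bigl(\sigma_G(v_1)\sigma_G(v_2)\bigr)_{k,\ell}=\sum_{m=1}^n \alpha_{g_k^{-1}g_m}\beta_{g_m^{-1}g_\ell}.
\]
The key step is the change of summation variable $g_m=g_\ell g_j^{-1}$, equivalently $g_j=g_m^{-1}g_\ell$. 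Because left (and right) multiplication by a fixed group element is a bijection on $G$, as $j$ ranges over $\{1,\dots,n\}$ so does $m$, and the substitution converts the $j$-sum into the $m$-sum termwise, giving equality of the two expressions. This yields $\sigma_G(v_1v_2)=\sigma_G(v_1)\sigma_G(v_2)$.

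The only delicate point is this relabeling; the rest is formal. To be thorough one should also check that $\sigma_G$ carries the unit $1_G=e$ to the identity matrix, which is immediate because the only nonzero coefficient of $1_G$ is $\alpha_e=1$, and $\alpha_{g_i^{-1}g_j}=1$ iff $g_i=g_j$. I expect the indexing gymnastics in the substitution $g_m=g_\ell g_j^{-1}$ to be the main obstacle to a clean write-up; care must be taken that the bijection is used consistently and that the noncommutativity of $G$ does not lead to an index mismatch between left and right translations.
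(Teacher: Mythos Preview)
Your argument is correct. The paper does not supply its own proof of this lemma; it is stated as a citation to Hurley's original paper \cite{bib1} and left unproved. Your verification of additivity, the unit, injectivity via the first row, and multiplicativity via the substitution $g_m=g_\ell g_j^{-1}$ is the standard direct computation, and the index bookkeeping you carried out matches the convolution formula exactly as the paper defines it. There is nothing to compare against in the paper itself, so your write-up stands on its own as a complete and correct proof.
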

For an element $v=\sum_{i=1}^{n} \alpha_{g_{i}} g_{i}\in \mathbb{F}_qG$, we denote $v^{(s)}=\sum_{i=1}^{n} \alpha_{g_{i}} g_{i}^s$. Then 
\begin{lemma}(\cite{bib2}) \label{Lemmma 2.3}
	Let $G$ be a finite group of order $n$. For an element $v=\sum_{i=1}^{n} \alpha_{g_{i}} g_{i}\in \mathbb{F}_qG$, we have that $\sigma_G(v^{(-1)})=\sigma_G(v)^T$, where $\sigma_G(v)^T$ is the transpose of $\sigma_G(v)$. Moreover, the group $G$ isomorphic to a subgroup of $\mathrm{PAut}(C(v))$.	
\end{lemma}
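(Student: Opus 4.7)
The plan is to prove the two assertions separately: first the matrix identity $\sigma_G(v^{(-1)})=\sigma_G(v)^T$ by direct coefficient comparison, and then the embedding $G\hookrightarrow \mathrm{PAut}(C(v))$ by realizing the left regular representation as a coordinate permutation that stabilizes the left ideal $C(v)$.

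For the first part, I would unpack the definition of $v^{(-1)}$: writing $v^{(-1)}=\sum_i \alpha_{g_i}g_i^{-1}=\sum_h\gamma_h h$ with $\gamma_h=\alpha_{h^{-1}}$. Then the $(i,j)$-entry of $\sigma_G(v^{(-1)})$ is $\gamma_{g_i^{-1}g_j}=\alpha_{(g_i^{-1}g_j)^{-1}}=\alpha_{g_j^{-1}g_i}$, which is exactly the $(j,i)$-entry of $\sigma_G(v)$, i.e., the $(i,j)$-entry of $\sigma_G(v)^T$. This is pure bookkeeping and should be a couple of lines.

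For the second part, I would first identify $C(v)$ with the left principal ideal $\mathbb{F}_qG\cdot v$. Concretely, the $i$-th row of $\sigma_G(v)$ is $(\alpha_{g_i^{-1}g_j})_{j=1}^n$; applying $\Psi^{-1}$ and substituting $h=g_i^{-1}g_j$ rewrites it as $\sum_h \alpha_h (g_ih)=g_iv$, so the row space equals $\mathbb{F}_qG\cdot v$. Next, for each $g\in G$, left multiplication $u\mapsto gu$ acts on coordinates by the rule $(gu)_h=u_{g^{-1}h}$ (a direct calculation from expanding $gu=\sum_j u_{g_j}(gg_j)$ and re-indexing). Thus left multiplication by $g$ is induced by the coordinate permutation $\pi_g:g_k\mapsto gg_k$. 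Because $C(v)$ is a left ideal, $\pi_g(C(v))=gC(v)\subseteq C(v)$, and since $\pi_g$ is bijective on coordinates it induces an automorphism of $C(v)$, i.e., $\pi_g\in\mathrm{PAut}(C(v))$. The map $g\mapsto\pi_g$ is a homomorphism (since $\pi_{gh}=\pi_g\pi_h$ by associativity) and is injective because $\pi_g=\mathrm{id}$ forces $gg_1=g_1$, hence $g=e$; this is exactly the faithfulness of the left regular representation.

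There isn't really a deep obstacle here; the whole lemma is a careful translation between three languages (group ring multiplication, matrix entries, and coordinate permutations). The most error-prone step will be keeping the indices consistent in the coordinate action of left multiplication, since a sign/inverse mistake there would spoil both the permutation formula $\pi_g(g_k)=gg_k$ and the verification that $\pi_g$ stabilizes $C(v)$. Once that formula is established correctly, the rest follows from the left ideal structure and the faithfulness of the regular representation.
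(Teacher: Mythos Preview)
The paper does not actually prove this lemma; it is quoted from \cite{bib2} without argument. So there is no ``paper's own proof'' to compare against here---only the question of whether your argument is correct, which it is. Your two computations are the standard ones: the transpose identity is a one-line coefficient check once you rewrite $v^{(-1)}=\sum_h\alpha_{h^{-1}}h$, and the embedding of $G$ into $\mathrm{PAut}(C(v))$ is precisely the left regular representation acting on the left ideal $\Psi^{-1}(C(v))=\mathbb{F}_qG\cdot v$. Your verification that the $i$-th row of $\sigma_G(v)$ is $\Psi(g_iv)$ is exactly what the paper means by ``it is not difficult to verify that $C(v)$ is a left principal ideal,'' and the rest (homomorphism, faithfulness, stabilization of the ideal) is routine and handled correctly.
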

\section{Several classes of groups and their $G$-matrix}\label{Sec.3}
Let us review the semidirect product of two groups. Semidirect products are direct products in which the componentwise operation is twisted by a group-on-group action. Given two groups $A$ and $B$, let $\varphi$ be a homomorphism from $B$ to $\text{Aut}(A)$, where $\text{Aut}(A)$ is the automorphism group of $A$. The semidirect product $A\rtimes_{\varphi}B$ with the multiplication is defined by $$
(a_1,b_1)(a_2,b_2)=(a_1\varphi(b_1)(a_2),b_1b_2)
$$ for all $a_1,a_2 \in A$ and $ b_1,b_2 \in B$. Notice that when $\varphi$ is a trivial homomorphism, then $A\rtimes_{\varphi}B$ is the direct product $A\times B$ with
componentwise multiplication.
\\ \indent Let $\mathbf{a}=(a_1,a_2,\dots ,a_n)\in \mathbb{F}_q^n$, we call the matrix
$$A=\begin{pmatrix}
	a_1 & a_2 &\dots   & a_n\\
	a_{n-1}&a_1  & \dots  & a_{n-2}\\
	\vdots& \vdots &\dots  & \vdots\\
	a_2& a_3 &\dots  &a_1
\end{pmatrix}$$ the circulant matrix generated by $\mathbf{a}$, denoted by $A=circ(\mathbf{a})$. Similarly, we denote the block circulant matrix by $Circ(A_1,A_2,\dots,A_n)$, where $A_i$ are matrices of the same type, $1\le i \le n$. We call an $n$ by $n$ matrix  a $\lambda$-circulant matrix if each element of its row vector is the result of moving each element of the previous row vector $\lambda$  position to the right, and we denote it by $circ_{\lambda}(a_1,a_2,\dots ,a_n).$ For example, $$
circ_{2}(a_1,a_2,a_3,a_4)=\begin{pmatrix}
	a_1 & a_2 & a_3 &a_4 \\
	a_3 & a_4 & a_1 & a_2\\
	a_1 & a_2 & a_3 & a_4\\
	a_3&a_4  & a_1 &a_2
\end{pmatrix}.
$$
 Let $[a]_n$ denote the smallest positive integer $a'$ such that $a\equiv a' \mod n$. Notice that when $\lambda=1$, the 1-circulant matrix is actually a circulant matrix and if $[\lambda_1]_n= [\lambda_2]_n$, then $circ_{\lambda_1}({\bf a})=circ_{\lambda_2}({\bf a})$. We define $T_n$ as a cyclic shift operator on a vector of length $n$, i.e. $T_n(b_1,b_2,\dots ,b_n)=(b_n,b_1,\dots ,b_{n-1})$, where $b_i$ can be either a vector or an element. Let $C_n$ be the cyclic group of order $n$. Next we study $G$-matrix of several classes of groups.
\subsection{The group $C_n\rtimes_{\varphi}C_m$}
First we consider the semidirect product of two cyclic groups. Let $$G_1=C_n\rtimes_{\varphi}C_m=\left\langle y \right\rangle \rtimes_{\varphi}\left\langle x \right\rangle =\left\langle x,y\mid x^m=y^n=1, y^x=y^k \right\rangle, $$ where $k$ satisfies $k^m
\equiv 1\mod n$ and $k^t\not\equiv 1 \mod n$ for $1\le t\le m-1$. Notice that  $k=1$ indicates $\varphi$ is trivial and $m=2, k=n-1$ indicates that $G_1$ is actually a dihedral group $D_n$. Let $v=\sum_{j=0}^{m-1}\sum_{i=0}^{n-1}a_{1+i+nj}x^jy^i\in \mathbb{F}_qG_1$, we have the following theorem:
 \begin{theorem}\label{th 3.1}
 	If $(b_1,b_2,\dots ,b_{mn})$ is a codeword in $C(v)$, so are $(\mathbf{b}_1,\mathbf{b}_2,\dots,\mathbf{b}_m)$ and $T_m(\mathbf{b}_1',\mathbf{b}_2',\dots,\mathbf{b}_{m}')$, where $\mathbf{b}_{j+1}=T^{[k^j]_n}_n(b_{1+nj},b_{2+nj}, \dots ,b_{n+nj})$ and $\mathbf{b}_{j+1}'=(b_{1+nj},b_{2+nj},\dots ,b_{n+nj})$, $0\le j \le m-1$.
 \end{theorem}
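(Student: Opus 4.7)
The plan is to invoke Lemma~\ref{Lemmma 2.3}, which embeds $G_1$ into $\mathrm{PAut}(C(v))$ via the left regular action $g_i\mapsto g\,g_i$. Under this embedding, for any codeword $c=\sum_l c_l g_l\in C(v)$ and any $g\in G_1$, the vector $gc$ is again in $C(v)$, and its coordinates are given by $(gc)_l = c_{l'}$, where $l'$ is the index of $g^{-1}g_l$ under the ordering $g_{1+i+nj}=x^jy^i$. It therefore suffices to specialise $g$ to the two generators $y$ and $x$ and to verify that the resulting permuted codewords are exactly the two vectors asserted in the statement.

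The key technical input is the commutation identity
\[
y^{a}x^{b}=x^{b}y^{ak^{b}},
\]
which I would prove by an easy induction on $b$ from the defining relation $yx=xy^{k}$. As an immediate consequence,
\[
y^{-1}(x^{j}y^{i}) = x^{j}y^{i-k^{j}} = x^{j}y^{[i-k^{j}]_n},
\]
so $(yc)_{1+i+nj}=c_{1+[i-k^{j}]_n+nj}$. Translated into the decomposition of $\mathbb{F}_q^{mn}$ into $m$ consecutive blocks of length $n$, this says that left multiplication by $y$ preserves the block structure and rotates the $(j{+}1)$-st block by $[k^{j}]_n$ positions under the operator $T_n$; this is exactly the codeword $(\mathbf{b}_1,\dots,\mathbf{b}_m)$ with $\mathbf{b}_{j+1}=T_n^{[k^{j}]_n}(\mathbf{b}_{j+1}')$.

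For the action of $x$, the relation $x^{m}=1$ yields $x^{-1}(x^{j}y^{i}) = x^{[j-1]_m}y^{i}$, so left multiplication by $x$ fixes the internal ordering of each length-$n$ block and cyclically rotates the $m$ blocks by one slot. This is precisely $T_m(\mathbf{b}_1',\dots,\mathbf{b}_m')$, establishing the second assertion.

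The main obstacle is purely bookkeeping. Because the embedding puts $g^{-1}$ (rather than $g$) in the coordinate formula, the commutation identity must be applied in the right orientation so that the $k$-power attached to block $j$ comes out as $k^{j}$ and not $k^{-j}$; and the direction of the cyclic shift inside each block must be matched to the convention $T_n(b_1,\dots,b_n)=(b_n,b_1,\dots,b_{n-1})$ so that the exponent of $T_n$ indeed reads $[k^{j}]_n$. Once these indexing conventions are pinned down, both assertions are immediate from Lemma~\ref{Lemmma 2.3} applied to the generators $y$ and $x$.
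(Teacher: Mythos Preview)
Your proposal is correct and follows essentially the same route as the paper's proof: both arguments act on a codeword by the generators $x$ and $y$ via left multiplication (you phrase this through Lemma~\ref{Lemmma 2.3} and the formula $(gc)_l=c_{\mathrm{pos}(g^{-1}g_l)}$, while the paper directly computes $xv'$ and $yv'$ using that $C(v)$ is a left ideal), and both rely on the commutation identity $yx^{j}=x^{j}y^{k^{j}}$ to reduce $y\cdot x^{j}y^{i}$ to the form $x^{j}y^{i+k^{j}}$. The bookkeeping you flag as the only obstacle is exactly what the paper carries out explicitly.
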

 \begin{proof}
 	Let $v'=\sum_{j=0}^{m-1}\sum_{i=0}^{n-1}b_{1+i+nj}x^jy^i\in \mathbb{F}_qG_1$. Then 
 	$$
 	\begin{aligned}
 		xv'	&=\sum_{j=0}^{m-1}\sum_{i=0}^{n-1}b_{1+i+nj}x^{j+1}y^i\\ &=\sum_{i=0}^{n-1}b_{1+i+n(m-1)}y^i+\sum_{j=0}^{m-2}\sum_{i=0}^{n-1}b_{1+i+nj}x^{j+1}y^i.
 	\end{aligned}
 	$$
 	Thus $\Psi(xv')=(\mathbf{b}_m',\mathbf{b}_1',\dots,\mathbf{b}_{m-1}')\in C(v)$. 
 	$$
 	\begin{aligned}
 		yv'&=\sum_{j=0}^{m-1}\sum_{i=0}^{n-1}b_{1+i+nj}yx^{j}y^i\\ &=\sum_{j=0}^{m-1}\sum_{i=0}^{n-1}b_{1+i+nj}xy^kx^{j-1}y^i\\
 		&=\sum_{j=0}^{m-1}\sum_{i=0}^{n-1}b_{1+i+nj}x^2y^{k^2}x^{j-2}y^i\\& \dots 
 		\\ &=\sum_{j=0}^{m-1}\sum_{i=0}^{n-1}b_{1+i+nj}x^jy^{[k^j]_n+i}.	
 	\end{aligned}
 	$$
 	Thus $\Psi(yv')=(\mathbf{b}_1,\mathbf{b}_2,\dots,\mathbf{b}_m)\in C(v)$.
 \end{proof}
 Then from Theorem \ref{th 3.1}, we can immediately get
 \begin{equation}\label{eq 1}
 	\sigma_{G_1}(v)=Circ(A_1,A_2,\dots ,A_m),
 \end{equation} where $$A_{j+1}=circ_{[k^{j}]_n}(a_{1+nj},a_{2+nj},\dots ,a_{n+nj}), 0\le j \le m-1.$$ 
 \subsection{The group $(C_{n_1}\times C_{n_2})\rtimes_{\varphi} C_m$}
 Next we consider the semidirect product of the direct product of two cyclic groups and a cyclic group. Let $$G_2=(C_{n_1}\times C_{n_2})\rtimes_{\varphi} C_m=\left\langle x,y,z\mid x^{n_1}=y^{n_2}=z^m=1,x^z=x^{k_1},y^z=y^{k_2} \right\rangle, $$ where $k_1,k_2$ satisfy $k_i^m
 \equiv 1\mod n_i$ and $k_i^{t}\not\equiv 1 \mod n_i$ for $1\le t\le m-1$, $i=1,2.$ Let $v=\sum_{i=0}^{m-1}\sum_{j=0}^{n_2-1}\sum_{k=0}^{n_1-1}a_{1+k+n_1j+n_1n_2i}z^iy^jx^k\in \mathbb{F}_qG_2$, we have the following theorem:
 \begin{theorem}\label{th 3.2}
 	If $(b_1,b_2,\dots ,b_{n_1n_2m})$ is a codeword in $C(v)$, so are $(\widetilde{\mathbf{b}_1},\widetilde{\mathbf{b}_2},\dots, \widetilde{\mathbf{b}_m} )$, $(\widetilde{\mathbf{b}_1'},\widetilde{\mathbf{b}_2'},\dots, \widetilde{\mathbf{b}_m'} )$ and $T_m(\widetilde{\mathbf{b}_1''},\widetilde{\mathbf{b}_2''},\dots, \widetilde{\mathbf{b}_m''} )$ where 
 	$$
 	\widetilde{\mathbf{b}_{i+1}}=(\mathbf{b}_1^{(i+1)},\mathbf{b}_2^{(i+1)},\dots ,\mathbf{b}_{n_2}^{(i+1)}),
 	$$
 	$$
 	\widetilde{\mathbf{b}_{i+1}'}=T^{[k_2^i]_{n_2}}_{n_2}(\mathbf{b}_1'^{(i+1)},\mathbf{b}_2'^{(i+1)},\dots , \mathbf{b}_{n_2}'^{(i+1)}),
 	$$
 	$$
 	\widetilde{\mathbf{b}_{i+1}''}=(b_{1+n_1n_2i},b_{2+n_1n_2i},\dots ,b_{n_1n_2(i+1)}),
 	$$
 	$$
 	\mathbf{b}_{j+1}^{(i+1)}=T^{[k_1^i]_{n_1}}_{n_1}(b_{in_1n_2+jn_1+1},b_{in_1n_2+jn_1+2},\dots ,b_{in_1n_2+(j+1)n_1}),
 	$$
 	$$
 	\mathbf{b}_{j+1}'^{(i+1)}=(b_{in_1n_2+jn_1+1},b_{in_1n_2+jn_1+2},\dots ,b_{in_1n_2+(j+1)n_1}),
 	$$
 	$$0\le i \le m-1, 0\le j \le n_2-1.$$
 \end{theorem}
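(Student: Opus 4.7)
The plan is to mirror the proof of Theorem \ref{th 3.1}, now producing three codewords corresponding to left multiplication by each of the three generators $x$, $y$, and $z$. Set
\[
v' = \sum_{i=0}^{m-1}\sum_{j=0}^{n_2-1}\sum_{k=0}^{n_1-1} b_{1+k+n_1 j + n_1 n_2 i}\, z^i y^j x^k,
\]
so that $\Psi(v') = (b_1, b_2, \ldots, b_{n_1 n_2 m}) \in C(v)$. Since $C(v)$ is a left ideal, each of $xv'$, $yv'$, $zv'$ lies in $C(v)$, and it suffices to identify their coordinate vectors with the three vectors claimed in the theorem.

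First I would establish the commutation rules needed to rewrite any product $w \cdot z^i y^j x^k$ (with $w \in \{x,y,z\}$) in the standard form $z^{i'} y^{j'} x^{k'}$. From the defining relations $x^z = x^{k_1}$ and $y^z = y^{k_2}$ we obtain $xz = zx^{k_1}$ and $yz = zy^{k_2}$, and a short induction on $i$ then yields
\[
x z^i = z^i x^{[k_1^i]_{n_1}}, \qquad y z^i = z^i y^{[k_2^i]_{n_2}}.
\]
In addition, $x$ and $y$ commute since they both lie in the normal subgroup $C_{n_1} \times C_{n_2}$.

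With these identities in hand, I would compute the three products and read off the coordinate vectors. For $zv'$, the substitution $z \cdot z^i y^j x^k = z^{i+1} y^j x^k$ cyclically permutes the $m$ outer blocks of length $n_1 n_2$, giving $T_m(\widetilde{\mathbf{b}_1''}, \ldots, \widetilde{\mathbf{b}_m''})$. For $yv'$, the identity $y \cdot z^i y^j x^k = z^i y^{[k_2^i + j]_{n_2}} x^k$ shows that inside each outer $i$-block the $n_2$ consecutive $x$-subblocks (each of length $n_1$) are cyclically shifted by $[k_2^i]_{n_2}$ positions, reproducing $(\widetilde{\mathbf{b}_1'}, \ldots, \widetilde{\mathbf{b}_m'})$. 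For $xv'$, the identity $x \cdot z^i y^j x^k = z^i y^j x^{[k_1^i + k]_{n_1}}$ shows that the coordinates inside each innermost $x$-subblock at position $(i,j)$ are cyclically shifted by $[k_1^i]_{n_1}$, reproducing $(\widetilde{\mathbf{b}_1}, \ldots, \widetilde{\mathbf{b}_m})$.

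The main obstacle will be purely bookkeeping: matching the exponent shifts $[k_1^i]_{n_1}$ and $[k_2^i]_{n_2}$ against the three-level nested definitions of $\mathbf{b}_{j+1}^{(i+1)}$, $\widetilde{\mathbf{b}_{i+1}}$ and their primed counterparts, all layered on top of the single-index enumeration $1 + k + n_1 j + n_1 n_2 i$. Once the correspondence between each triple $(i,j,k)$ and its scalar index is spelled out and held fixed throughout, each of the three verifications reduces to a one-line coefficient comparison via the commutation identities above.
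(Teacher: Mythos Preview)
Your proposal is correct and follows essentially the same approach as the paper: define $v'$ with the given coefficients, compute $xv'$, $yv'$, $zv'$ using the commutation identities $xz^i=z^ix^{[k_1^i]_{n_1}}$ and $yz^i=z^iy^{[k_2^i]_{n_2}}$, and read off the three coordinate vectors. The paper carries out exactly these three computations (in the order $x$, $y$, $z$ rather than $z$, $y$, $x$) with no additional ingredients.
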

 \begin{proof}
 	Let $v'=\sum_{i=0}^{m-1}\sum_{j=0}^{n_2-1}\sum_{k=0}^{n_1-1}b_{1+k+n_1j+n_1n_2i}z^iy^jx^k\in C(v)$. Then $$
 	\begin{aligned}
 		xv'	&=x\sum_{i=0}^{m-1}\sum_{j=0}^{n_2-1}\sum_{k=0}^{n_1-1}b_{1+k+n_1j+n_1n_2i}z^iy^jx^k\\
 		&=\sum_{i=0}^{m-1}\sum_{j=0}^{n_2-1}\sum_{k=0}^{n_1-1}b_{1+k+n_1j+n_1n_2i}xz^iy^jx^k\\
 		&=\sum_{i=0}^{m-1}\sum_{j=0}^{n_2-1}\sum_{k=0}^{n_1-1}b_{1+k+n_1j+n_1n_2i}zx^{k_1}z^{i-1}y^jx^k\\
 		&=\dots \\
 		&=\sum_{i=0}^{m-1}\sum_{j=0}^{n_2-1}\sum_{k=0}^{n_1-1}b_{1+k+n_1j+n_1n_2i}z^ix^{[k_1^i]_{n_1}}y^jx^k\\
 		&=\sum_{i=0}^{m-1}\sum_{j=0}^{n_2-1}\sum_{k=0}^{n_1-1}b_{1+k+n_1j+n_1n_2i}z^iy^jx^{k+[k_1^i]_{n_1}}.
 	\end{aligned}
 	$$
 	Thus $\Psi(xv')=(\widetilde{\mathbf{b}_1},\widetilde{\mathbf{b}_2},\dots, \widetilde{\mathbf{b}_m} )\in C(v)$.
 	 $$
 	\begin{aligned}
 		yv'	&=y\sum_{i=0}^{m-1}\sum_{j=0}^{n_2-1}\sum_{k=0}^{n_1-1}b_{1+k+n_1j+n_1n_2i}z^iy^jx^k\\
 	  &=\sum_{i=0}^{m-1}\sum_{j=0}^{n_2-1}\sum_{k=0}^{n_1-1}b_{1+k+n_1j+n_1n_2i}yz^iy^jx^k\\
 	  &=\sum_{i=0}^{m-1}\sum_{j=0}^{n_2-1}\sum_{k=0}^{n_1-1}b_{1+k+n_1j+n_1n_2i}z^iy^{j+[k_2^i]_{n_2}}x^k.
 	\end{aligned}
 	$$
 	Thus $\Psi(yv')=(\widetilde{\mathbf{b}_1'},\widetilde{\mathbf{b}_2'},\dots, \widetilde{\mathbf{b}_m'} )\in C(v)$.
 	 $$
 	\begin{aligned}
 		zv'	&=z\sum_{i=0}^{m-1}\sum_{j=0}^{n_2-1}\sum_{k=0}^{n_1-1}b_{1+k+n_1j+n_1n_2i}z^iy^jx^k\\
 		&=\sum_{i=0}^{m-1}\sum_{j=0}^{n_2-1}\sum_{k=0}^{n_1-1}b_{1+k+n_1j+n_1n_2i}z^{i+1}y^jx^k. 
 	\end{aligned}
 	$$ Thus $\Psi(zv')=T_m(\widetilde{\mathbf{b}_1''},\widetilde{\mathbf{b}_2''},\dots, \widetilde{\mathbf{b}_m''} )\in C(v)$.
 \end{proof}
 Then from Theorem \ref{th 3.2}, we can immediately get
 \begin{equation}\label{eq 1}
 	\sigma_{G_2}(v)=Circ(A_1,A_2,\dots ,A_m),
 \end{equation} where 
 $$
 A_{i+1}=Circ_{[k_2^i]_{n_2}}(A^{(i+1)}_1,A^{(i+1)}_2,\dots , A^{(i+1)}_{n_2}),
 $$
 $$
 A^{(i+1)}_{j+1}=circ_{[k_1^j]_{n_1}}(a_{1+n_1j+n_1n_2i},a_{2+n_1j+n_1n_2i},\dots ,a_{n_1+n_1j+n_1n_2i}),
 $$
 $$0\le i \le m-1, 0\le j \le n_2-1.$$
 In fact, we can generalize $G_1 ,G_2$ to the following form: $$G_s=(\prod_{t=1}^{s}C_{n_t})\rtimes_{\varphi}C_m.  $$
 $\sigma_{G_s}(v)$ can be obtained by induction for $v\in \mathbb{F}_qG_s$.
 \section{The automorphism groups of $C(v)$}\label{Sec.4}
 In this section, we focus on the binary case. The automorphism group of a binary linear code is actually its permutation automorphism group. A permutation $\pi \in S_n$ is called type $p$-$(c,f)$ if it has $c$ cycles of length $p$ and $f$ fixed points in its decomposition. For example, let $\pi=(1,2,3)(4,5,6)\in S_{13}$, then $\pi$ is of type $3$-$(2,7)$.\\ \indent
 Let $\mathcal{C}$ be a binary linear code of length $n$. Assume that an automorphism $\pi=\Omega_1\Omega_2\cdots \Omega_c\Omega_{c+1}\cdots\Omega_{c+f}\in \text{PAut}(\mathcal{C})$ is of type $p$-$(c,f)$, where $\Omega_1, \dots , \Omega_c$ are disjoint cycles of length $p$ and $\Omega_{c+1},\cdots,\Omega_{c+f}$ are the fixed points. Let $$F_{\pi}(\mathcal{C})=\left\lbrace \mathbf{c}\in \mathcal{C}\mid \pi(\mathbf{c})=\mathbf{c} \right\rbrace $$ and $$E_{\pi}(\mathcal{C})=\left\lbrace \mathbf{c}=(c_1,c_2,\dots,c_n)\in \mathcal{C} \mid \sum_{i\in \Omega_j}c_i \equiv 0 \mod 2 , 1\le j \le c+f \right\rbrace. \ $$ 
 \begin{theorem}(\cite{bib4})\label{th 4.1}
 	With the notations above, if $\gcd(p,2)=1$, then $\mathcal{C}=F_{\pi}(\mathcal{C})\oplus E_{\pi}(\mathcal{C})$. Moreover, $F_{\pi}(\mathcal{C})$ and $E_{\pi}(\mathcal{C})$ are $\pi$-invariant.
 \end{theorem}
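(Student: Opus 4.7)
The plan is to exhibit an explicit projection of $\mathcal{C}$ onto $F_\pi(\mathcal{C})$ and show that the complementary piece, cut out by that projection, is exactly $E_\pi(\mathcal{C})$. Since all non-trivial cycles of $\pi$ have length $p$ and the fixed points are pointwise fixed, $\pi^p$ is the identity permutation, so I would introduce the \emph{averaging} operator $N(\mathbf{c})=\sum_{i=0}^{p-1}\pi^i(\mathbf{c})$. Because $\pi$ is an automorphism of $\mathcal{C}$, each $\pi^i(\mathbf{c})$ lies in $\mathcal{C}$, and the relation $\pi\circ N=N$ gives $N(\mathbf{c})\in F_\pi(\mathcal{C})$. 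The candidate decomposition is then $\mathbf{c}=N(\mathbf{c})+(\mathbf{c}+N(\mathbf{c}))$, and the task reduces to verifying that the second summand belongs to $E_\pi(\mathcal{C})$ and that the intersection $F_\pi(\mathcal{C})\cap E_\pi(\mathcal{C})$ is trivial.

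To confirm the $E_\pi$-membership, I would compute the $i$-th coordinate of $N(\mathbf{c})$: it equals the sum $S_j$ of the entries of $\mathbf{c}$ over the $\pi$-orbit $\Omega_j$ containing $i$. On a fixed point, $N(\mathbf{c})_i=c_i$, so $\mathbf{c}+N(\mathbf{c})$ vanishes there and the single-coordinate cycle-sum condition is automatic. On a length-$p$ cycle $\Omega_j$, summing the coordinates of $\mathbf{c}+N(\mathbf{c})$ over $\Omega_j$ gives $S_j+p\,S_j=(1+p)S_j$, which is $0$ in $\mathbb{F}_2$ because $\gcd(p,2)=1$ forces $p$ to be odd. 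Hence $\mathbf{c}+N(\mathbf{c})\in E_\pi(\mathcal{C})$, establishing $\mathcal{C}=F_\pi(\mathcal{C})+E_\pi(\mathcal{C})$.

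For the sum to be direct, I would take $\mathbf{c}\in F_\pi(\mathcal{C})\cap E_\pi(\mathcal{C})$; being $\pi$-invariant, $\mathbf{c}$ is constant on each $\Omega_j$ with some value $b_j$, and the $E_\pi$ condition then reads $p\,b_j=0$ on length-$p$ cycles and $b_j=0$ on fixed points, forcing $\mathbf{c}=0$ by the oddness of $p$. The $\pi$-invariance of $F_\pi(\mathcal{C})$ is immediate from the definition, and the $\pi$-invariance of $E_\pi(\mathcal{C})$ follows because every $\Omega_j$ is itself $\pi$-invariant, so permuting coordinates by $\pi$ preserves each cycle-sum condition. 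The only real subtlety is the twin use of $\gcd(p,2)=1$, but since the oddness of $p$ is precisely the working hypothesis this is not a genuine obstacle; the delicate point is simply to recognize that the averaging operator $N$, which in characteristic zero would be $p$ times a projection, acts as a genuine projection onto $F_\pi(\mathcal{C})$ in $\mathbb{F}_2$ exactly when $p$ is odd.
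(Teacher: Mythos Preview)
Your argument is correct. The averaging operator $N=\sum_{i=0}^{p-1}\pi^{i}$ is indeed an $\mathbb{F}_2$-linear projection of $\mathcal{C}$ onto $F_\pi(\mathcal{C})$ with kernel $E_\pi(\mathcal{C})$, and you have verified both the image, the complement, and the trivial intersection correctly. One cosmetic point: when you write ``on a fixed point, $N(\mathbf{c})_i=c_i$'' you are already invoking $p$ odd (the sum is $p\cdot c_i$), so this is a third use of $\gcd(p,2)=1$, not one of the ``twin'' uses you mention at the end.

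As for comparison with the paper: the paper does \emph{not} prove Theorem~4.1 at all. It is stated with a citation to Huffman~\cite{bib4} and used as a black box. Your proof is the standard one and matches the original argument in Huffman's paper, which likewise proceeds via the trace (averaging) map over the cyclic group generated by $\pi$.
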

 From Theorem \ref{th 4.1} we know that for a given automorphism of a binary linear code, we can decompose this  code into the direct sum form of two subcodes. Our main ideas for constructing linear codes with good parameters are as follows:
 \begin{enumerate}
 	\item Find a code $\mathcal{C}$ whose parameters are the same as or better than the best-known code in \cite{bib5} with an automorphism $\pi$.
 	\item Decompose $\mathcal{C}$ into the direct sum form of two subcodes by Theorem \ref{th 4.1}.
 	\item If $\mathcal{C}$ has parameters $[n,k,d]$ and $E_{\pi}(\mathcal{C})$ has parameters $[n,k-k',d'> d]$. Let $\mathcal{C}'$ be a code with parameters $[n',k',d-d']$, then by using Construction X, we can obtain a code with parameters $[n+n',k,d']$.
 \end{enumerate}
 The key to the above steps is to find $E_{\pi}(\mathcal{C})$ with the largest possible dimension and the minimum distance larger than the original code $\mathcal{C}$. The automorphism groups of the best known linear codes in \cite{bib5} most are either trivial or of very small order, and therefore do not apply to our ideas. From Lemma \ref{Lemmma 2.3} we know that the order of $\text{PAut}(C(v))$ must be a multiple of $\lvert G \rvert$ for $v\in \mathbb{F}_qG$. Thus, we can try to decompose group codes to get some good subcodes.
 \begin{theorem}
 	Let $G=\left\lbrace g_1,g_2,\dots ,g_n \right\rbrace $ be a finite group of order $n$ and $v\in \mathbb{F}_qG$. If $\lvert \mathrm{PAut}(C(v)) \rvert=n$ and  $\pi \in \mathrm{PAut}(C(v))$ is of type $p$-$(c,f), p\ne 1$ then $f=0$. In other words, $\pi$ has no fixed points, where $p$ is the order of $\pi$ and $n=pc$.
 \end{theorem}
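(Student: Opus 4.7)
The plan is to show that the action of $G$ on the coordinate set $\{g_1,\ldots,g_n\}$ coming from the embedding in Lemma \ref{Lemmma 2.3} is the regular action, which is fixed-point free on non-identity elements. Combined with the hypothesis $|\mathrm{PAut}(C(v))| = n = |G|$, this forces the entire automorphism group to act freely off the identity, so any non-trivial $\pi$ has $f = 0$.

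First I would invoke Lemma \ref{Lemmma 2.3} to embed $G$ as a subgroup of $\mathrm{PAut}(C(v))$. Under the hypothesis that $|\mathrm{PAut}(C(v))| = n$, both groups have the same (finite) order, hence the embedding is an isomorphism and $\mathrm{PAut}(C(v))$ equals the image of $G$. In particular, the given $\pi$ corresponds to some $g \in G$, and since the order of $\pi$ is $p \ne 1$, the element $g$ is not the identity.

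Next I would identify the permutation explicitly. Examining the definition of $\sigma_G$, the entry in row $i$ and column $j$ of $\sigma_G(v)$ is $\alpha_{g_i^{-1}g_j}$, so the $i$-th row of $\sigma_G(v)$ equals the image under $\Psi$ of $g_i^{-1} v$ when we identify codewords with elements of $\mathbb{F}_q G$. Left multiplication by an element $g \in G$ therefore acts on coordinates by the rule $g_j \mapsto g \cdot g_j$ (up to the fixed labelling convention, one may equivalently obtain a right-regular action; what matters is that it is the regular representation of $G$ on itself). Because $g \cdot g_j = g_j$ would force $g = e$, this action is free: every non-identity element of $G$ permutes the $n$ coordinates without fixing any of them.

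Combining the two steps, the permutation $\pi$ corresponds under the isomorphism to a non-identity element $g \in G$, and by the previous paragraph this element has no fixed point in its cycle decomposition. Hence $f = 0$, and since the cycle lengths sum to $n$ we obtain $n = pc$, as claimed. The only subtle step is the second one, namely checking that the embedding provided by Lemma \ref{Lemmma 2.3} is indeed the regular representation; this is a direct unpacking of the formula for $\sigma_G$ together with the multiplicativity $\sigma_G(v_1 v_2) = \sigma_G(v_1)\sigma_G(v_2)$, with no further combinatorial content.
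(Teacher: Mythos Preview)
Your argument is correct and follows essentially the same route as the paper: both observe that the hypothesis $|\mathrm{PAut}(C(v))|=|G|$ forces the automorphism group to coincide with the image of $G$ under the regular representation, and then use that the regular action of a group on itself has no fixed points off the identity. Your write-up is in fact a bit more careful than the paper's in making explicit why the embedding from Lemma~\ref{Lemmma 2.3} really is the regular action on coordinates.
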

 \begin{proof}
 	$\lvert \text{PAut}(\mathcal{C}(v))\rvert=n$ implies that $\text{PAut}(\mathcal{C}(v))\cong G$. Let $v=\sum_{j=1}^n\alpha_{g_j}g_j\in \mathbb{F}_qG$. Then $\text{PAut}(\mathcal{C}(v))=\left\lbrace \pi_{g_1},\dots ,\pi_{g_n} \right\rbrace $, where $\pi_{g_i}(\Psi(v))=(\alpha_{g_i^{-1}g_1},\alpha_{g_i^{-1}g_2},\dots ,\alpha_{g_i^{-1}g_n})$. If $\pi_{g_i}$ has a fixed point then there is a $j$ such that $\pi_{g_i}(\alpha_{g_j})=\alpha_{g_j}$. Then $g_i^{-1}g_j=g_j$ which implies that $g_i={\text{Id}}(G)$, i.e. $p=1$. This yields a contradiction and we finish the proof.  
 \end{proof}
 From the definition of $E_{\pi}(C(v))$ and $F_{\pi}(C(v))$ we quickly draw the following conclusion:
 \begin{theorem}\label{th 4.3}
 	Let $G$ be a finite group of order $n$ and $\pi$ be an automorphism of $C(v)$ of type $p-(c,0)$ for $v\in \mathbb{F}_2G$. Then the dimension of $F_{\pi}(C(v))$ is less than or equal to $c$, the minimum distance of $F_{\pi}(C(v))$ is a multiple of $p$ and the minimum distance of $E_{\pi}(C(v))$ is even. 
 \end{theorem}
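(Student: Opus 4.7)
The plan is to exploit the orbit decomposition of $\pi$ and handle the three claims separately; all three follow from tracking weights orbit-by-orbit, so the proof is essentially bookkeeping rather than substantive. Write $\pi = \Omega_1\Omega_2\cdots\Omega_c$, where each $\Omega_j$ is a $p$-cycle and, since $f=0$, these cycles partition $\{1,2,\dots,n\}$ into $c$ orbits of size $p$, giving $n = pc$.

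First I would bound $\dim F_{\pi}(C(v))$. A codeword $\mathbf{c} = (c_1,\dots,c_n) \in F_\pi(C(v))$ satisfies $\pi(\mathbf{c}) = \mathbf{c}$, which forces $c_i = c_{\pi(i)}$ for every $i$, i.e.\ $\mathbf{c}$ is constant on each orbit $\Omega_j$. Hence $\mathbf{c}$ is determined by its $c$ orbit-values, so $F_\pi(C(v))$ embeds into $\mathbb{F}_2^c$ via $\mathbf{c} \mapsto (c_{i_1},\dots,c_{i_c})$ (picking one representative $i_j \in \Omega_j$), and therefore $\dim F_\pi(C(v)) \le c$.

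Next I would compute the weight of a nonzero $\mathbf{c} \in F_\pi(C(v))$. Since $\mathbf{c}$ is constant on each orbit, the restriction $\mathbf{c}|_{\Omega_j}$ is either the all-zero vector (weight $0$) or the all-one vector (weight $p$). Summing over the $c$ orbits gives $\mathrm{wt}(\mathbf{c}) = p \cdot \#\{j : \mathbf{c}|_{\Omega_j} \ne \mathbf{0}\}$, which is a multiple of $p$. In particular the minimum distance of $F_\pi(C(v))$ is a multiple of $p$.

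Finally, for $E_\pi(C(v))$, the defining condition $\sum_{i \in \Omega_j} c_i \equiv 0 \pmod 2$ says that each orbit contains an even number of $1$'s (here the entries $c_i \in \{0,1\} \subset \mathbb{Z}$, so the integer sum and the mod-$2$ sum have the same parity). Therefore $\mathrm{wt}(\mathbf{c}) = \sum_{j=1}^{c} \sum_{i \in \Omega_j} c_i$ is a sum of even non-negative integers, hence even. The only place one has to be a little careful is in this last step, distinguishing the $\mathbb{F}_2$-linear condition from the $\mathbb{Z}$-valued weight computation; but since each entry is $0$ or $1$, the two parities coincide and the claim follows immediately.
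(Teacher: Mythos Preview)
Your proof is correct and follows exactly the route the paper intends: the paper itself gives no explicit argument for this theorem, merely remarking that the conclusion is drawn directly from the definitions of $F_{\pi}(C(v))$ and $E_{\pi}(C(v))$, and your orbit-by-orbit weight bookkeeping is precisely the immediate verification that makes this claim precise.
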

 
 \section{Computational results}\label{Sec 5}
 In this section, we will search for binary linear codes with good parameters based on Sections \ref{Sec.3} and \ref{Sec.4}. All the upcoming computational results were obtained by performing searches in the software package Magma(\cite{bib6}). 
 \\ \indent  We use the computer to search for a suitable $v\in \mathbb{F}_2G_s$ such that $C(v)$ is a linear code with new parameters. However, if we use linear search, the number of times we search will reach $2^n$, where $n=\lvert G_s\rvert$. The number of searches explodes exponentially as $n$ increases, so this is obviously not advisable. Thus we use random searches. Notice that if we choose different $v_1,v_2\in \mathbb{F}_2G_s$, $C(v_1)$ may be equal to $C(v_2)$. 
  \begin{theorem}\label{th 5.1}
 	Let $G$ be a finite group of order $n$ and $v_1,v_2\in \mathbb{F}_qG$, if $v_2$ is invertible in $\mathbb{F}_qG$ then $C(v_1)=C(v_2v_1)$.
 \end{theorem}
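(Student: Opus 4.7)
The plan is to reduce the statement to the standard linear-algebra fact that left-multiplying a matrix by an invertible matrix preserves its row space, and to transport that fact across $\sigma_G$ via Lemma~2.2.

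First I would observe that by Lemma~2.2 (the injective ring homomorphism property) we have
\begin{equation*}
\sigma_G(v_2v_1)=\sigma_G(v_2)\,\sigma_G(v_1).
\end{equation*}
Since $C(v)$ is defined as the row space of $\sigma_G(v)$, the identity $C(v_1)=C(v_2v_1)$ will follow once I establish that $\sigma_G(v_2)$ is an invertible $n\times n$ matrix over $\mathbb{F}_q$.

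Next I would verify this invertibility. Writing $1_G=g_\ell$ (the identity element of $G$), a direct inspection of the defining formula for $\sigma_G$ shows that the $(i,j)$-entry of $\sigma_G(1_{\mathbb{F}_qG})$ is $1$ precisely when $g_i^{-1}g_j=g_\ell=1_G$, i.e.\ when $i=j$, and $0$ otherwise; so $\sigma_G(1)=I_n$. Applying Lemma~2.2 to the identity $v_2v_2^{-1}=v_2^{-1}v_2=1$ in $\mathbb{F}_qG$ therefore yields
\begin{equation*}
\sigma_G(v_2)\,\sigma_G(v_2^{-1})=\sigma_G(v_2^{-1})\,\sigma_G(v_2)=I_n,
\end{equation*}
so $\sigma_G(v_2)$ is invertible with inverse $\sigma_G(v_2^{-1})$.

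Finally, since left-multiplication by an invertible matrix performs an invertible sequence of row operations and hence leaves the row space unchanged, the row space of $\sigma_G(v_2)\sigma_G(v_1)$ equals the row space of $\sigma_G(v_1)$. Combined with the first display, this gives $C(v_2v_1)=C(v_1)$, as desired. There is no real obstacle here; the only place one has to be slightly careful is checking $\sigma_G(1)=I_n$ so that $\sigma_G(v_2^{-1})$ genuinely serves as a matrix inverse of $\sigma_G(v_2)$, but this is immediate from the definition of $\sigma_G$.
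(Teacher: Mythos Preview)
Your proof is correct. The paper takes a slightly more direct route: it works entirely inside the group ring, using that $C(v)$ is the left principal ideal $(\mathbb{F}_qG)v$. One inclusion $C(v_2v_1)\subseteq C(v_1)$ is immediate, and the reverse follows by writing $v_1=(v_2^{-1}v_2)v_1$ so that $C(v_1)=C(v_2^{-1}(v_2v_1))\subseteq C(v_2v_1)$. Your argument transports the same idea to the matrix side via $\sigma_G$, first checking $\sigma_G(1)=I_n$ to conclude that $\sigma_G(v_2)$ is invertible, and then invoking the row-space fact. Both arguments are really the same two-sided inclusion using the inverse; the paper's version avoids the small overhead of verifying $\sigma_G(1)=I_n$, while yours has the advantage of making the connection to elementary linear algebra explicit.
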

 \begin{proof}
 	It is obvious that $C(v_2v_1)\subset C(v_1)$. Since $v_2$ is invertible, then there exists  $v_3\in \mathbb{F}_qG$ such that $v_3v_2=1$. We have $C(v_1)=C(v_3v_2v_1)\subset C(v_2v_1)$, thus $C(v_1)=C(v_2v_1)$.
 \end{proof}
 From Theorem \ref{th 5.1} we know that if a code $C(v)$ is with parameters $[n,k,d]$, there are many $v'\in \mathbb{F}_qG$ such that $C(v)=C(v')$. The number of $v'$ that satisfy this condition depends on the number of invertible elements in $\mathbb{F}_qG$. Therefore, it is feasible for us to adopt computer search. We list our findings in Table \ref{table 1} and the values of $v_i$ are listed in Table \ref{table2}. In Table \ref{table 1}, we write new record-breaking codes in bold. Further, we give the order of the permutation automorphism groups of these codes in the last column.
 \begin{table}[h]
 	\begin{minipage}{174pt}\caption{New record-breaking binary linear codes from Section \ref{Sec.3}}\label{table 1}
 		\begin{tabular}{@{}llllll@{}}
 			\toprule
 			$v$ & Type of $G$ & $C(v)$ & $C(v)^{\bot}$ & Best known codes in \cite{bib5}&$\lvert \text{PAut}(C(v))\rvert$ \\ 
 			\midrule
 			$v_1$&$G_1$&${\bf [54,31,10]}_2$&$[54,23,12]_2$&$[54,31,9]_2$&108\\ 
 			$v_2$&$G_2$&$ [78,54,6]_2$&$\textbf{[78,24,24]}_2$&$[78,24,22]_2$&312\\
 			$v_3$&$G_1$&$ \textbf{[81,48,12]}_2$&$[81,33,15]_2$&$[81,48,11]_2$&81\\ 
 			$v_4$&$G_1$&$ \textbf{[81,54,10]}_2$&$[81,27,18]_2$&$[81,54,9]_2$&81\\ 
 			$v_5$&$G_1$&$ \textbf{[84,51,12]}_2$&$[84,33,14]_2$&$[84,51,11]_2$&84\\ 
 			$v_6$&$G_2$&$ [90,64,8]_2$&$\textbf{[90,26,25]}_2$&$[90,26,24]_2$&90\\
 			$v_7$&$G_2$&$ [98,61,10]_2$&$\textbf{[98,37,22]}_2$&$[98,37,21]_2$&98\\
 			$v_8$&$G_2$&$ [98,60,12]_2$&$\textbf{[98,38,21]}_2$&$[98,38,20]_2$&98\\  
 			$v_9$&$G_2$&$ [100,72,7]_2$&$\textbf{[100,28,28]}_2$&$[100,28,27]_2$&100\\
 			$v_{10}$&$G_2$&$[100,64,10]_2$ &$\textbf{[100,36,24]}_2$&$[100,36,23]_2$&100\\
 			$v_{11}$&$G_2$&$[105,78,7]_2$ &$\textbf{[105,27,32]}_2$&$[105,27,30]_2$&105\\
 			$v_{12}$&$G_2$&$[105,72,10]_2$ &$\textbf{[105,33,27]}_2$&$[105,33,26]_2$&105\\
 			$v_{13}$&$G_2$&$[105,76,9]_2$ &$\textbf{[105,29,30]}_2$&$[105,29,28]_2$&105\\
 			$v_{14}$&$G_2$&$[105,73,8]_2$ &$\textbf{[105,32,28]}_2$&$[105,32,27]_2$&105\\
 			$v_{15}$&$G_2$&$[110,89,6]_2$ &$\textbf{[110,21,38]}_2$&$[110,21,36]_2$&110\\
 			\bottomrule
 		\end{tabular}
 	\end{minipage}
 \end{table}
 \begin{table}[h]
 	\begin{minipage}{174pt}\caption{The values of $v_i$}\label{table2}
 		\begin{tabular}{@{}lll@{}}
 			\toprule
 			
 			$v_i$ &Parameters of $G$& $\Psi(v_i)$ \\ 
 			\midrule
 			$v_1$&$(n,m,k)=(9,6,2)$&\begin{tabular}[c]{@{}l@{}}
 				$(0 1 1 1 1 1 1 0 1 0 0 0 0 0 0 1 0 0 0 0 1 0 11 1 0 0 1 0 0 1 0 1 0 1 0 0 1 0 11 0 0 1$\\ $ 0 1 1 1 1 11 0 1 1)$
 			\end{tabular}\\
 			$v_2$&$(n_1,k_1,n_2,k_2,m)=(2,1,13,3,3)$&\begin{tabular}[c]{@{}l@{}}
 				$(1 0 0 0 0 0 1 0 1 0 1 1 1 0 0 1 0 0 1 1 1 1 0 0
 				0 1 1 0 0 1 0 1 1 1 0 1 0 1 1 1 1 0 1 0 $\\ $1 0 1 1 0 0 0
 				0 0 1 1 1 0 1 1 1 1 0 0 0 0 0 1 0 1 1 0 1 1 0 1 0 0 0)$
 			\end{tabular}\\
 			$v_3$&$(n,m,k)=(9,9,4)$&\begin{tabular}[c]{@{}l@{}}
 				$(0 1 0 0 1 0 1 1 1 0 0 1 0 0 1 0 1 1 0 0 0 0 0 
 				1 0 1 1 1 1 1 0 1 1 1 1 0 0 1 0 0 1 0 1 0$\\ $ 1 1 1 0 1 0
 				1 1 0 0 1 1 0 1 1 0 1 0 0 0 1 1 1 1 1 1 1 1 0 0 1 1 1
 				1 1 0 0)$
 			\end{tabular}\\
 			$v_4$&$(n,m,k)=(9,9,4)$&\begin{tabular}[c]{@{}l@{}}
 				$(1 1 0 0 1 0 0 0 1 1 1 0 1 0 0 0 0 0 1 0 0 1 0 
 				0 1 0 0 1 0 1 1 0 0 1 0 1 1 0 0 1 0 0 1 1$\\ $ 1 1 0 1 1 0
 				0 1 1 0 0 0 0 1 1 0 0 0 0 0 0 0 1 0 1 0 0 1 0 0 1 0 1
 				0 1 0 1)$
 			\end{tabular}\\
 			$v_5$&$(n,m,k)=(7,12,4)$&\begin{tabular}[c]{@{}l@{}}
 				$(0 1 1 0 0 0 0 1 1 0 0 0 0 0 0 1 0 1 1 0 0 1 0 
 				0 1 1 0 1 0 0 1 1 1 1 0 1 0 1 1 1 1 0 0 0 $\\ $1 1 1 0 1 0
 				1 1 1 0 1 0 1 1 0 0 0 1 0 0 1 1 0 1 0 1 1 1 1 1 1 1 0
 				1 1 1 0 1 1 0)$
 			\end{tabular}\\
 			$v_6$&$(n_1,k_1,n_2,k_2,m)=(3,1,15,4,2)$&\begin{tabular}[c]{@{}l@{}}
 				$(0 1 0 1 0 1 0 0 0 1 0 1 0 0 0 0 1 1 0 1 1 0 1 1
 				1 0 0 1 0 0 0 1 0 1 1 1 0 0 0 0 0 1 0 1$\\ $ 1 0 1 0 0 0 0
 				0 0 1 0 0 0 0 1 0 1 1 1 1 0 0 0 0 1 1 1 0 0 0 0 1 0 1
 				1 0 0 0 1 1 0 0 1 0 1 $\\ $1)$
 			\end{tabular}\\
 			$v_7$&$(n_1,k_1,n_2,k_2,m)=(7,6,7,1,2)$&\begin{tabular}[c]{@{}l@{}}
 				$(1 0 1 0 1 0 1 1 1 0 1 0 0 1 1 1 1 0 0 0 0 1 1 1
 				1 1 0 1 0 1 0 0 0 1 0 0 0 1 0 0 0 0 0 0 $\\ $0 1 1 1 0 1 1
 				1 1 0 1 1 1 1 1 0 1 0 0 0 1 0 0 0 1 0 0 1 1 1 1 0 0 0
 				0 0 1 0 1 0 0 1 1 1 0$\\ $ 1 0 1 1 1 0 0 0 0)$
 			\end{tabular}\\
 			$v_8$&$(n_1,k_1,n_2,k_2,m)=(7,6,7,1,2)$&\begin{tabular}[c]{@{}l@{}}
 				$(0 0 0 0 0 0 0 0 0 1 1 0 1 0 0 1 0 1 0 1 0 1 1 
 				0 1 0 1 0 1 1 1 1 1 1 0 1 0 0 1 1 0 1 0 0 $\\ $0 1 1 0 0 1
 				1 0 1 1 0 1 0 1 1 0 1 0 1 0 1 0 0 0 1 0 0 0 0 1 1 0 1
 				1 0 0 1 0 0 0 0 1 0 1 1 $\\ $1 0 0 0 1 0 0 0 1)$
 			\end{tabular}\\
 			$v_9$&$(n_1,k_1,n_2,k_2,m)=(5,2,5,2,4)$&\begin{tabular}[c]{@{}l@{}}
 				$(1 0 1 0 0 0 0 0 1 1 0 1 0 0 0 0 1 0 1 0 0 0 1 0 0
 				0 0 0 0 1 1 1 1 0 0 1 0 1 1 1 1 0 0 0$\\ $ 0 0 0 1 0 0 1 1
 				1 1 0 0 0 0 1 0 0 0 0 0 1 1 1 0 0 1 1 0 0 1 0 0 1 1 0
 				1 0 0 1 1 0 0 0 1 0 $\\ $1 0 1 1 0 0 1 1 1 1 1)$
 			\end{tabular}\\
 			$v_{10}$&$(n_1,k_1,n_2,k_2,m)=(5,2,5,2,4)$&\begin{tabular}[c]{@{}l@{}}
 				$(1 0 1 1 1 0 0 0 1 0 0 0 0 0 0 1 0 1 0 0 1 1 0 
 				0 1 0 1 1 1 0 0 1 0 1 1 1 0 0 1 1 0 1 1 1 1$\\ $ 0 1 0 1 0
 				1 0 1 0 0 1 0 0 1 0 0 1 1 0 0 1 0 1 1 1 0 0 1 0 0 0 0
 				1 0 1 0 1 0 0 0 0 0 0 1 0 1$\\ $ 1 1 0 1 0 1 0 0 0)$
 			\end{tabular}\\
 			$v_{11}$&$(n_1,k_1,n_2,k_2,m)=(7,2,5,1,3)$&\begin{tabular}[c]{@{}l@{}}
 				$(1 1 1 0 0 1 0 0 1 0 1 1 1 1 1 1 1 1 0 1 1 0 0 1
 				1 1 1 1 1 0 1 1 0 1 0 0 0 0 1 0 0 0 1 0 $\\ $0 0 0 0 1 1 0
 				1 1 1 0 0 0 1 0 0 0 0 0 1 1 0 0 0 1 0 1 1 0 0 1 0 1 1
 				1 1 1 0 1 0 1 1 1 1 1$\\ $ 0 1 0 0 1 1 1 1 0 0 1 1 1 0 0 0)$
 			\end{tabular}\\
 			$v_{12}$&$(n_1,k_1,n_2,k_2,m)=(7,2,5,1,3)$&\begin{tabular}[c]{@{}l@{}}
 				$(0 0 0 1 0 0 1 1 0 1 0 1 0 1 1 1 1 0 0 1 0 0 1 
 				0 1 1 1 1 0 0 0 0 0 0 1 1 1 0 0 1 0 1 0 0$\\ $ 1 0 1 0 0 1
 				0 0 1 0 1 1 0 1 1 1 0 0 1 1 1 0 1 0 0 1 1 0 1 0 1 1 1
 				1 0 0 0 1 1 1 0 0 0 0 1$\\ $ 0 1 1 0 0 0 0 1 0 0 0 1 0 1 1
 				0 )$
 			\end{tabular}\\
 			$v_{13}$&$(n_1,k_1,n_2,k_2,m)=(7,2,5,1,3)$&\begin{tabular}[c]{@{}l@{}}
 				$(1 0 0 0 1 0 1 0 0 0 1 0 0 1 1 0 1 0 0 0 0 1 1 1
 				1 0 0 0 0 1 1 1 0 0 1 0 0 1 1 1 1 1 1 0 $\\ $0 1 0 0 0 0 0
 				1 0 0 1 1 0 0 1 1 0 1 1 1 1 0 0 1 1 1 0 1 1 0 1 1 1 0
 				1 0 0 0 1 1 0 1 1 1 1 $\\ $1 0 1 1 1 1 1 0 0 0 1 0 0 1 1 0 )$
 			\end{tabular}\\
 			$v_{14}$&$(n_1,k_1,n_2,k_2,m)=(7,2,5,1,3)$&\begin{tabular}[c]{@{}l@{}}
 				$(1 0 1 1 0 1 1 0 0 1 1 1 1 1 0 1 1 1 1 0 1 0 0 
 				0 1 1 0 0 0 0 1 1 0 1 1 1 0 1 1 0 1 0 1 1$\\ $ 0 0 1 0 1 0
 				0 1 0 0 1 1 1 0 1 0 0 0 1 1 0 0 1 1 1 0 0 1 0 0 1 0 0
 				1 0 0 0 1 1 1 0 1 0 1 1 $\\ $0 0 1 1 1 0 1 1 1 1 1 0 0 1 1
 				0 )$
 			\end{tabular}\\
 			$v_{15}$&$(n_1,k_1,n_2,k_2,m)=(2,1,11,4,5)$&\begin{tabular}[c]{@{}l@{}}
 				$(1 0 0 1 1 1 1 0 1 0 0 1 1 0 0 0 1 0 0 1 0 1 0 1 
 				1 1 0 1 0 0 1 0 0 0 0 1 1 1 1 0 0 1 1 1$\\ $ 1 0 1 1 1 1 0
 				0 0 1 0 0 0 1 0 1 0 1 0 0 0 1 1 1 0 1 1 1 0 1 1 0 1 1
 				1 0 1 0 1 1 0 0 1 1 1$\\ $ 1 0 0 1 0 0 0 1 1 0 1 0 1 0 0 1
 				0 0 0 1 1 )$
 			\end{tabular}\\
 			\bottomrule
 		\end{tabular}
 	\end{minipage}
 \end{table}
 \\ \indent From Theorem $\ref{th 4.3}$ we know that if there exists a binary $[n,k,d]$ linear code $\mathcal{C}$ with an automorphism $\pi$ of type $p$-$(c,0)$, where $d$ is odd. Then the minimum distance of $F_{\pi}(\mathcal{C})$ is greater than $d$. Using this result we can construct the following linear code which has a larger minimum distance than the best known code $[108,23,35]_2$ in \cite{bib5}.
 \begin{theorem}\label{Th 5.2}
 	There is a binary $\textbf{[108,23,36]}$ linear code. 
 \end{theorem}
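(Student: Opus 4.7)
The plan is to realize a $[108,23,36]$ binary linear code as the fixed subcode $F_\pi(\mathcal{C})$ of a binary group code $\mathcal{C}=C(v)$ of length $108$ with odd minimum distance $35$, where $\pi$ is a permutation automorphism of order $3$ and cycle type $3$-$(36,0)$ arising naturally from the group action. The key algebraic input is Theorem \ref{th 4.3}: a fixed-point-free odd-order automorphism $\pi$ forces every codeword of $F_\pi(\mathcal{C})$ to have weight divisible by $p$, so if $\mathcal{C}$ has odd minimum distance $d$ then $F_\pi(\mathcal{C})$ has minimum distance at least the next multiple of $p$ beyond $d$. With $d=35$ and $p=3$ this jump lands exactly at $36$.

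Concretely, I would first pick a group $G$ of order $108$ from one of the families of Section \ref{Sec.3}, for instance $G_1=C_n\rtimes_\varphi C_m$ or $G_2=(C_{n_1}\times C_{n_2})\rtimes_\varphi C_m$ with $|G|=108$, chosen to contain an element $\pi$ of order $3$. By Lemma \ref{Lemmma 2.3}, $G$ embeds into $\mathrm{PAut}(C(v))$ by left multiplication, and since this action is free, $\pi$ automatically has cycle type $3$-$(36,0)$ on the $108$ coordinates of $\mathbb{F}_2G$. Then I would perform the Magma-based random search already used in Section \ref{Sec 5} (reduced modulo the equivalence of Theorem \ref{th 5.1}) to locate $v\in\mathbb{F}_2G$ such that $C(v)$ has parameters $[108,k,35]$ with $\dim F_\pi(C(v))=23$. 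Since $\gcd(3,2)=1$, Theorem \ref{th 4.1} applies and Theorem \ref{th 4.3} gives that the minimum distance of $F_\pi(C(v))$ is a multiple of $3$; combined with $F_\pi(C(v))\subseteq C(v)$ of minimum distance $35$, this forces the minimum distance of $F_\pi(C(v))$ to be at least $36$. A direct Magma computation of the minimum weight then pins it down to exactly $36$, yielding the claimed $[108,23,36]_2$ code.

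The main obstacle is purely computational: the space $\mathbb{F}_2G$ has $2^{108}$ elements, and simultaneously guaranteeing that $C(v)$ is $[108,k,35]$ and that the fixed subcode has dimension precisely $23$ cannot be arranged in closed form. One must therefore rely on the randomized search, justifying its feasibility by the large orbit sizes implied by Theorem \ref{th 5.1}. A minor secondary point is the choice of $G$: one needs $G$ to admit an order-$3$ element (automatic for $|G|=108$ by Cauchy's theorem), and it is convenient to pick $G$ so that the $3$-part of $G$ is visible in the semidirect structure, making $\pi$ easy to specify and its fixed subcode easy to compute. Once a suitable $v$ is exhibited, the proof is finished by quoting Theorem \ref{th 4.3}; no further case analysis is required.
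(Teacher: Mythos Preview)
Your approach is mathematically coherent but diverges in every major choice from the paper's construction, and remains speculative at the crucial computational step.

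The paper does \emph{not} work at length $108$ directly. Instead it takes $G_2=(C_7\times C_5)\rtimes C_3$ of order $105$, finds $v\in\mathbb{F}_2G_2$ with $C(v)^\perp$ a $[105,23,33]$ code, picks $\pi\in\mathrm{PAut}(C(v)^\perp)$ of order $35$ (type $35$-$(3,0)$), and computes $E_\pi(C(v)^\perp)$, which turns out to be $[105,22,36]$. Construction~X with the auxiliary $[3,1,3]$ code then produces the $[108,23,36]$ code. So the paper uses (i) the \emph{dual} of a length-$105$ group code, (ii) the subcode $E_\pi$ rather than $F_\pi$, (iii) an automorphism of order $35$ rather than $3$, and (iv) Construction~X to reach length $108$. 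The relevant input from Theorem~\ref{th 4.3} is that $E_\pi$ has even minimum distance, forcing at least $34$; the jump to $36$ is then verified by Magma.

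Your route---find a length-$108$ group code with odd minimum distance $35$ and read off $F_\pi$ for an order-$3$ automorphism---would indeed avoid Construction~X entirely and would be cleaner if it succeeded. The divisibility argument you give (weights in $F_\pi$ are multiples of $3$, hence $\ge 36$) is correct. However, the whole proof rests on the existence of a $v\in\mathbb{F}_2G$ with $C(v)$ of parameters $[108,k,35]$ and $\dim F_\pi(C(v))=23$, which you have not exhibited and for which there is no theoretical guarantee. The paper's authors do search among order-$108$ groups (they obtain $[108,16,42]$ from $\mathrm{SmallGroup}(108,9)$), yet for the $[108,23,36]$ code they went through length $105$ and Construction~X; this at least suggests that the direct length-$108$ search you propose did not readily produce the required $v$. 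Until you actually locate such a $v$, your proposal is a plausible plan rather than a proof.
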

 \begin{proof}
 	Let $G_2=(C_7\times C_5)\rtimes C_3=\left\langle x,y,z\mid x^7=y^5=z^3=1,x^z=x^2,y^z=y \right\rangle$ and $v\in \mathbb{F}_2G_2$, where $\Psi(v)=(1 0 0 0 1 0 0 1 0 0 1 1 0 1 0 1 1 1 1 0 1 1 0 1 0 0 0 1 1 1 1 0 0 1 1 0 0 0 1 1 1 0 1 0 1 1 0 0 1 1 1 0 0 0 \\0 0 1 1 1 1 0 1 0 1 0 1 1 0 0 0 1 1 0 0 1 1 0 0 0 0 0 1 1 1 0 1 1 1 0 0 0 1 1 1 0 0 1 1 0 1 1 1 1 1 0)$. According to Magma's calculations, $C(v)^{\bot}$ is a binary $[105,23,33]$ linear code and its automorphism group is iosmorphic to $G_2$. We select any element $\pi$ of order 35 in $\text{PAut}(C(v)^{\bot})$, then $E_{\pi}(C(v)^{\bot})$ is with parameters $[105,22,36]$. From Theorem \ref{th 4.1}, $E_{\pi}(C(v)^{\bot})$ is a subcode of $C(v)^{\bot}$. Let $\mathcal{C}'$ be the best known binary $[3,1,3]$ linear code. Then by using Construction X for $C(v)^{\bot}$, $E_{\pi}(C(v)^{\bot})$ and $ \mathcal{C}'$, we can obtain a binary linear code $\mathcal{C}''$ with parameters $[108,23,36]$. 
	Moreover, $\lvert \text{PAut}(\mathcal{C}'')\rvert=630$. 
 \end{proof}
 However, when the minimum distance of a binary linear code is even, the minimum distance of its subcode may also be larger than that of the original code. We have decomposed all the codes in Table \ref{table 1} and we obtain the following new code:
 \begin{theorem}\label{Th 5.3}
 	There is a binary $\textbf{[108,29,32]}$ linear code.
 \end{theorem}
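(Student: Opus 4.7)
The plan is to mirror the Construction X recipe of Theorem \ref{Th 5.2}, but starting from a length-$105$ code already at the target dimension $29$. The natural candidate in Table \ref{table 1} is $\mathcal{C}_0 = C(v_{13})^{\bot}$, with parameters $[105,29,30]_2$; by Lemma \ref{Lemmma 2.3} together with $\mathrm{PAut}(\mathcal{C}) = \mathrm{PAut}(\mathcal{C}^{\bot})$, the group $G_2 = (C_7\times C_5)\rtimes C_3$ embeds into $\mathrm{PAut}(\mathcal{C}_0)$, which has order $105$.

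I would first choose a non-identity $\pi \in \mathrm{PAut}(\mathcal{C}_0)$ of odd prime order $p \in \{3,5,7\}$. By the theorem preceding Theorem \ref{th 4.3}, any such $\pi$ must be fixed-point free, so its cycle type is $p$-$(c,0)$ with $pc = 105$. Since $\gcd(p,2) = 1$, Theorem \ref{th 4.1} gives the decomposition $\mathcal{C}_0 = F_\pi(\mathcal{C}_0) \oplus E_\pi(\mathcal{C}_0)$, and by Theorem \ref{th 4.3} the subcode $E_\pi(\mathcal{C}_0)$ has even minimum distance --- compatible with a jump from $30$ to $32$.

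Next, I would run a short Magma search (only a handful of prime-order elements of $G_2$ need to be tried, up to conjugacy) to find a $\pi$ for which $E_\pi(\mathcal{C}_0)$ has parameters $[105,28,d']$ with $d' \geq 32$. Once such a $\pi$ is exhibited, I apply Construction X to the chain $E_\pi(\mathcal{C}_0) \subset \mathcal{C}_0$ using the auxiliary code $\mathcal{C}' = [3,1,3]_2$ (the binary repetition code). The resulting code has length $108$, dimension $29$, and minimum distance $\min(30+3,\,d') = 32$, as required.

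The main obstacle is the two simultaneous tight constraints on $\pi$: the fixed subspace $F_\pi(\mathcal{C}_0)$ must be exactly one-dimensional so that $E_\pi(\mathcal{C}_0)$ has codimension one, and the distance jump from $30$ to at least $32$ must actually be realised. Theorem \ref{th 4.3} only bounds $\dim F_\pi(\mathcal{C}_0)$ from above by $c$ and asserts evenness of $d'$, so both properties require direct computer verification. If no single $\pi$ on $C(v_{13})^{\bot}$ works, a natural fallback is to test other length-$105$ dimension-$29$ codes arising from the $G_s$ family or to relax to a longer auxiliary code; however, given the richness of $\mathrm{PAut}(\mathcal{C}_0)$ and the modest gap between $30$ and $32$, success on the first attempt is expected.
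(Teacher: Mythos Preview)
Your overall plan---start from $\mathcal{C}_0=C(v_{13})^{\bot}$, extract the subcode $E_\pi(\mathcal{C}_0)$ via Theorem~\ref{th 4.1}, then apply Construction~X---is precisely the paper's route. The divergence is in the parameters you aim for. The paper does \emph{not} restrict to prime-order automorphisms: it takes $\pi$ of order~$35$ (type $35$-$(3,0)$), for which Magma returns $E_\pi(\mathcal{C}_0)=[105,27,32]$, i.e.\ codimension \emph{two}, and then pairs this with the auxiliary code $[3,2,2]_2$ (not $[3,1,3]_2$) to obtain $[108,29,\min(32,30+2)]=[108,29,32]$.

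Your codimension-one target is the fragile step. Since $\lvert\mathrm{PAut}(\mathcal{C}_0)\rvert=105=\lvert G_2\rvert$, the automorphism group is exactly $G_2=(C_7\times C_5)\rtimes C_3$, whose unique subgroup of order~$35$ is cyclic. Every element of order~$5$ or~$7$ is therefore a power of an order-$35$ element $\tau$, whence $F_\pi(\mathcal{C}_0)\supseteq F_\tau(\mathcal{C}_0)$; but the paper's computation gives $\dim F_\tau(\mathcal{C}_0)=29-27=2$, so $\dim F_\pi(\mathcal{C}_0)\ge 2$ for every $p\in\{5,7\}$, ruling out codimension one there. That leaves only $p=3$, with no structural reason to expect both $\dim F_\pi=1$ and $d(E_\pi)\ge 32$. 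The actual fix is neither of your stated fallbacks but simply to drop the prime-order restriction and accept codimension two, matching it with a $[3,2,2]$ auxiliary code---exactly what the paper does.
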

 \begin{proof}
 	Let $\pi$ be an any element of order 35 in $\text{PAut}(C(v_{13})^{\bot})$. Then $E_{\pi}(C(v_{13})^{\bot})$ is with parameters $[105,27,32]$. From Theorem \ref{th 4.1}, $E_{\pi}(C(v_{13})^{\bot})$ is a subcode of $C(v_{13})^{\bot}$. Let $\mathcal{C}'$ be the best known binary $[3,2,2]$ linear code, then by using Construction X for $C(v_{13})^{\bot}$, $E_{\pi}(C(v_{13})^{\bot}$ and $ \mathcal{C}'$, we can obtain a binary linear code $\mathcal{C}''$ with parameters $[108,29,32]$. Moreover, $\lvert \text{PAut}(\mathcal{C}'')\rvert=105$. 
 \end{proof}
 In fact, some special groups can  also be used to construct new record-breaking linear codes. Magma includes all groups of order up to 2000 excluding the groups of order 1024. We choose groups SmallGroup(108,9) and SmallGroup(120,5), then binary codes $\textbf{[108,16,42]}$ and $\textbf{[120,31,34]}$ can be obtained from above two groups. All of the generator matrices and related data in Theorem \ref{Th 5.2} and \ref{Th 5.3} are showed online at \url{https://github.com/Yucong1234/New_binary_linear_codes}. 
 \begin{theorem}\label{The 5.4}(\cite{bib3})
 	Let $\mathcal{C}$ be a binary $[n,k,d]$ code. Then we can construct an $[n-1,k,d-1]$ code by puncturing, an $[n-1,k-1,d]$ code by shortening and an $[n+1,k,d']$ code by extending, where $d'=d+1$ if $d$ is odd else $d'=d$. 
 \end{theorem}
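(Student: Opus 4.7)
The plan is to handle the three constructions independently and verify in each case that the claimed parameters are achieved. For the punctured code I would choose a coordinate $i$ at which some minimum-weight codeword $c^*$ is nonzero (possible since $c^*$ has $d \geq 1$ nonzero entries) and let $\pi_i : \mathbb{F}_2^n \to \mathbb{F}_2^{n-1}$ delete that coordinate. Linearity gives a code $\pi_i(\mathcal{C})$ of length $n-1$. The dimension is preserved because the kernel of $\pi_i|_{\mathcal{C}}$ consists of codewords supported only in position $i$, and any such nonzero codeword would have weight one, contradicting the implicit assumption $d \geq 2$ (needed so that $d-1$ is a meaningful minimum distance). The minimum distance equals $d-1$ since $\pi_i(c^*)$ has weight $d-1$, while deleting a single coordinate reduces weight by at most one.

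For the shortened code I would instead pick $i$ such that some minimum-weight codeword $c^{**}$ satisfies $c^{**}_i = 0$; this is possible whenever $d < n$, since $c^{**}$ then has $n-d \geq 1$ zero positions. Set $\mathcal{C}_i = \{ c \in \mathcal{C} : c_i = 0 \}$ and take $\mathcal{C}^{\mathrm{sh}} = \pi_i(\mathcal{C}_i)$. Choosing $i$ also so that the evaluation functional $c \mapsto c_i$ is nontrivial on $\mathcal{C}$ makes $\mathcal{C}_i$ a hyperplane of dimension $k-1$, and $\pi_i$ is then injective on $\mathcal{C}_i$. The minimum distance of $\mathcal{C}^{\mathrm{sh}}$ is at least $d$ since $\mathcal{C}_i \subseteq \mathcal{C}$ and $\pi_i$ is weight-preserving on $\mathcal{C}_i$, while equality follows from the image of $c^{**}$ having weight $d$.

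For the extended code I would form
\[
\mathcal{C}^+ = \bigl\{ (c_1, \ldots, c_n, c_{n+1}) \in \mathbb{F}_2^{n+1} : (c_1, \ldots, c_n) \in \mathcal{C},\ c_{n+1} = \textstyle\sum_{j=1}^n c_j \bigr\},
\]
which is the graph of the overall-parity functional and hence an injective linear image of $\mathcal{C}$ of dimension $k$. Every codeword of $\mathcal{C}^+$ has even weight. If $d$ is even, a weight-$d$ codeword of $\mathcal{C}$ extends with a $0$ to weight $d$ in $\mathcal{C}^+$, yielding $d' = d$. If $d$ is odd, every weight-$d$ codeword extends with a $1$ to weight $d+1$, while every other codeword of $\mathcal{C}$ has weight $\geq d+1$ and extends to weight $\geq d+1$, yielding $d' = d+1$. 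The only step requiring real care is the parity case-analysis in the extension; the coordinate selections for puncturing and shortening are elementary but must be made explicit to pin down the exact parameters.
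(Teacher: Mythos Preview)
The paper does not prove this theorem at all: it is quoted verbatim as a standard fact from Huffman--Pless, so there is no ``paper's own proof'' to compare against. Your argument is the standard textbook one and is essentially correct.

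One small point worth tightening in the shortening step: you impose two conditions on the coordinate $i$ (some minimum-weight word vanishes there, \emph{and} the evaluation functional $c\mapsto c_i$ is nontrivial) but do not verify they can be met simultaneously. They can, provided $\mathcal{C}$ has no all-zero coordinate and $d<n$: a minimum-weight word has $n-d\ge 1$ zero positions, and at each such position some codeword is nonzero by the no-all-zero-column assumption. These mild nondegeneracy hypotheses are implicit in the paper's statement as well, so this is not a defect relative to what is being claimed.
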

 Then from Theorem \ref{Th 5.2}-\ref{The 5.4}, Table \ref{table 1} and the above two record-breaking codes, we can obtain more new codes. We list the results in Table \ref{table 3}.
 Notice that $P^iS^jE^k$ indicates that a code is punctured by $i$ times, shortened by $j$ times and extended by $k$ times.   
  \begin{table}[h]
 	\begin{minipage}{174pt}\caption{New record-breaking binary linear codes and the corresponding construction methods.}\label{table 3}
 		\begin{tabular}{@{}ll@{}}
 			\toprule
 			Parameters&Construction methods\\ 
 			\midrule
 			$[54-i-j,31-i,10-j], 0\le i \le 3, 0\le j \le 1$&$P^jS^i$ $[54,31,10]$\\
 			$[78-i-j,24-i,24-j], 0\le i \le 2, 0\le j \le 1$&$P^jS^i$ $[78,24,24]$\\
 			$[78-j,24,24-j],  2\le j \le 3$&$P^j$ $[78,24,24]$\\
 			$[79,24,24]$&$E^1$ $[78,24,24]$\\
 			$[81-i-j,48-i,12-j], 0\le i \le 1, 0\le j \le 1$&$P^jS^i$ $[81,48,12]$\\
 			$[81-i-j,54-i,10-j], 0\le i \le 2, 0\le j \le 1$&$P^jS^i$ $[81,54,10]$\\
 			$[84-i-j,51-i,12-j], 0\le i \le 2, 0\le j \le 1$&$P^jS^i$ $[84,51,12]$\\
 			$[91,26,26]$&$E^1$ $[90,26,25]$\\
 			$[98-i-j,37-i,22-j], 0\le i \le 1, 0\le j \le 1$&$P^jS^i$ $[98,37,22]$\\
 			$[99,38,22]$&$E^1$ $[98,38,21]$\\
 			$[99,28,27]$&$P^1$ $[100,28,28]$\\
 			$[99,36,23]$&$P^1$ $[100,36,24]$\\
 			$[104,27,31]$&$P^1$ $[105,27,32]$\\
 			$[106,27,32]$&$E^1$ $[105,27,32]$\\
 			$[105-i-j,29-i,30-j], 0\le i \le 1, 0\le j \le 1$&$P^jS^i$ $[105,29,30]$\\
 			$[106,29,30]$&$E^1$ $[105,29,30]$\\
 			$[105-i-j,32-i,28-j], 0\le i \le 2, 0\le j \le 1$&$P^jS^i$ $[105,32,28]$\\
 			$[105+i,33,28], 1\le i \le 2$&$E^i$ $[105,32,28]$\\
 			$[107,16,41]$&$P^1$ $[108,16,42]$\\
 			$[107,23,35]$&$P^1$ $[108,23,36]$\\
 			$[108-i-j,29-i,32-j], 0\le i \le 2, 0\le j \le 1$&$P^jS^i$ $[108,29,32]$\\
 			$[109,21,37]$&$P^1$ $[110,21,38]$\\
 			$[111,21,38]$&$P^1$ $[110,21,38]$\\
 			$[119,31,33]$&$P^1$ $[120,31,34]$\\
 			\bottomrule
 		\end{tabular}
 	\end{minipage}
 \end{table}

 \section{Conclusion}\label{Sec 6}
 In this work, we employ $G$-matrix to find new record-breaking binary linear codes, where $G$ is the semidirect product of an abelian group and a cyclic group. By puncturing, shortening and extending, we obtain a total of  77 new record-breaking binary linear codes. In this paper, we only consider the binary case. In fact, the $G$-matrix can generate many  record-breaking non-binary linear codes. In addition, due to the limitation of computing power, we only search for linear codes with lengths less than or equal to 120. Interested readers can try non-binary cases or larger code lengths.
 \section*{Data availability}
 The authors have no conflict of interest related to the content of this article to declare.
 \section*{Acknowledgments}
 We would like to express our sincere appreciation to all those who have contributed to this paper. Special thanks to Prof. Markus Grassl for his help with our paper. The code in Theorem \ref{Th 5.3} was found with his help. This work was supported by the National Natural Science Foundation of China under Grant Nos U21A20428 and 12171134.
 \section*{Declaration}
 For now, this is only the initial version which was completed in April 2024, and the authors will add some new results in the future. If a reader has done something similar to us, to avoid duplication, please cite or contact us appropriately.

\bibliographystyle{unsrt}  


\end{document}